\tikzstyle{hyb}=[rectangle,fill=green!50,draw,minimum size=3mm]
\tikzstyle{tre}=[circle,fill=green!50,draw,minimum size=3.5mm]
 \theoremstyle{plain}
 \newtheorem{theorem}{Theorem}
 \newtheorem{proposition}[theorem]{Proposition}
 \newtheorem*{nlemma}{Lemma}
 \newtheorem*{nproposition}{Proposition}
 \theoremstyle{definition}
\renewcommand{\leq}{\leqslant}
\renewcommand{\geq}{\geqslant}
\newcommand{\TT}{\mathcal{T}}
\newcommand{\BT}{\mathcal{T}}
\begin{document}
\begin{frontmatter}

\title{The expected value  under the Yule model of the squared path-difference distance}

\author{Gabriel Cardona}
\ead{gabriel.cardona@uib.es}
\author{Arnau Mir}
\ead{arnau.mir@uib.es}
\author{Francesc Rossell\'o\corref{cor1}}
\ead{cesc.rossello@uib.es}
\cortext[cor1]{Corresponding author}
\address{Department of Mathematics and  Computer Science,
  University of the Balearic Islands,
  E-07122 Palma de
  Mallorca, Spain}

\begin{abstract}
The path-difference metric is one of the oldest and most popular distances for the comparison of phylogenetic trees, but its statistical properties are still quite unknown. In this paper we compute the expected value  under the Yule model of evolution of its square on the space of fully resolved rooted phylogenetic trees with $n$ leaves. This complements previous work by Steel--Penny and Mir--Rossell\'o,   who computed this mean value  for fully resolved unrooted and rooted phylogenetic trees, respectively, under the uniform distribution.
 \end{abstract}

\begin{keyword}
Phylogenetic tree\sep Nodal distance\sep Path-difference metric\sep Yule model\sep Sackin index
\end{keyword}
\end{frontmatter}

\section{Introduction}
\label{sec:intro}

The definition and study of metrics for the comparison of rooted phylogenetic trees  on the same set of taxa is a classical problem in phylogenetics \cite[Ch.~30]{fel:04}.  A classical and popular family of such metrics is based on the comparison, by different methods, of the vectors of lengths of the (undirected) paths connecting all pairs of taxa in the corresponding trees  \cite{farris:69,farris:73,phipps:71,willcliff:71}. These metrics are generically called \emph{nodal distances}, although some of them have also specific names. For instance, 
 the metric defined through the euclidean distance between path-lengths vectors is called  \emph{path-difference metric} \cite{steelpenny:sb93}, or  \emph{cladistic difference}  \cite{farris:69}. 
 
In contrast with other metrics, the statistical properties of these nodal distances are mostly unknown. Actually, the only statistical property that has been established so far for any one of them is the expected, or mean, value  of the square of the path-difference metric for unrooted  \cite{steelpenny:sb93}  and rooted \cite{MirR10} fully resolved phylogenetic trees under the uniform distribution (that is, when all phylogenetic trees with the same number of taxa are equiprobable). The knowledge of the expected value of a metric  is useful, because it  provides an indication about the significance of the similarity of two individuals measured through this metric  \cite{steelpenny:sb93}. 

But phylogeneticists consider also other probabilistic distributions on the space of phylogenetic trees on a fixed set of taxa, defined through stochastic models of evolution \cite[Ch.~33]{fel:04}. The most popular such model is Yule's  \cite{Harding71,Yule}, defined by an evolutionary process where, at each step, each currently extant species can give rise,  with the same probability, to two new species. Under this model, different phylogenetic trees with the same number of leaves may have different probabilities. Formal details on this model are given in the next section.

In this paper we compute the expected value of the square of the path-difference metric for rooted fully resolved phylogenetic trees under this Yule model. Besides the aforementioned application of this value in the assessment of tree comparisons, the knowledge of formulas for this expected value under different models may allow the use of the path-difference metric to test stochastic models  of tree growth, a popular line of research in the last years which so far has been mostly based on shape indices \cite{Mooers97}.

\section{Preliminaries}
\label{sec:prel}

In this paper, by a  \emph{phylogenetic tree} on a set $S$ of taxa we mean a fully resolved, or binary, rooted tree  with its leaves bijectively labeled in $S$.  We understand such a rooted  tree as a directed graph, with its arcs pointing away from the root. To simplify the language, we shall always identify a leaf of a phylogenetic tree with its label.  We shall also use the term \emph{phylogenetic tree with $n$ leaves} to refer to a phylogenetic tree on the set $\{1,\ldots,n\}$.  We shall denote by $\TT(S)$ the space of all phylogenetic trees on $S$ and by 
$\TT_n$ the space of all phylogenetic trees with $n$ leaves.

Whenever there exists a directed path from $u$ to $v$ in a phylogenetic tree $T$, we shall say that $v$ is a  \emph{descendant} of $u$.  The \emph{distance} $d_T(u,v)$ between two nodes $u,v$ in a phylogenetic tree $T$ is the length  (in number of arcs) of the unique undirected path connecting $u$ and $v$. The \emph{depth} $\delta_T(v)$ of a node $v$  in $T$ is the distance from the root $r$ of $T$ to $v$.  The \emph{path-difference distance} \cite{farris:69,farris:73} between a pair of trees $T,T'\in \TT_n$ is
$$
d_{\nu}(T,T')=\sqrt{\sum_{1\leq i<j\leq n} (d_T(i,j)-d_{T'}(i,j))^2}.
$$

The \emph{Yule}, or  \emph{Equal-Rate Markov}, model of evolution \cite{Harding71,Yule} is a stochastic model of phylogenetic trees' growth. It starts with a node, and at every step a leaf is chosen randomly and uniformly and it is splitted into two leaves. Finally, the labels are assigned randomly and uniformly to the leaves once the desired number of leaves is reached. Under this model, if $T$ is a phylogenetic{} tree with $n$ leaves and set of internal nodes $V_{int}(T)$, and if for every internal node $v$ we denote by $\ell_T(v)$ the number of its descendant leaves, then the probability of  $T$ is   \cite{Brown,SM01}
 $$
 P_Y(T)=\frac{2^{n-1}}{n!}\prod_{v\in V_{int}(T)}\frac{1}{\ell_T(v)-1}.
$$

For every $n\geq 1$, let $H_n=\sum_{i=1}^n 1/i$ and $H_n^{(2)}=\sum_{i=1}^n 1/i^2$. Let, moreover, $H_0=H_0^{(2)}=0$. $H_n$ is called the $n$-th \emph{harmonic number}, and $H_n^{(2)}$, the $n$-th \emph{generalized harmonic number of power $2$}. 

\section{Main results}

Let $N_n^2$ the random variable that chooses independently a pair of trees $T,T'\in \BT_n$ and computes 
$$
d_{\nu}(T,T')^2=\sum_{1\leq i<j\leq n} (d_T(i,j)-d_{T'}(i,j))^2.
$$
In this section we establish the following result.

\begin{theorem}\label{th:main}
The  expected value of $N_n^2$ under the Yule model is
$$
E_Y(N_n^2)=\frac{2n}{n-1}\big(2(n^2+24n+7)H_n+13n^2-46n+1-16(n+1)H_n^2-8(n^2-1)H_n^{(2)}\big).
$$
\end{theorem}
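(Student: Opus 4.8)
The plan is to expand the square and exploit independence of $T$ and $T'$. Writing
$$
d_\nu(T,T')^2 = \sum_{i<j}\big(d_T(i,j)^2 + d_{T'}(i,j)^2 - 2d_T(i,j)d_{T'}(i,j)\big),
$$
and taking expectations over the product Yule distribution, linearity gives
$$
E_Y(N_n^2) = 2\,E_Y\Big(\sum_{i<j} d_T(i,j)^2\Big) - 2\,E_Y\Big(\sum_{i<j} d_T(i,j)\Big)^2,
$$
since the cross term factors as a product of identical single-tree expectations. Thus everything reduces to computing, under the Yule model, the first moment $E_Y\big(\sum_{i<j} d_T(i,j)\big)$ and the second moment $E_Y\big(\sum_{i<j} d_T(i,j)^2\big)$ of the total path length. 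The first of these is essentially classical: using $d_T(i,j) = \delta_T(i)+\delta_T(j)-2\delta_T(\mathrm{lca}(i,j))$, or more directly noting that each internal node $v$ contributes $\binom{\ell_T(v)}{2}$ to $\sum_{i<j}(\text{something})$, one relates $\sum_{i<j} d_T(i,j)$ to the Sackin index $\sum_i \delta_T(i)$ and to a sum over internal nodes; the Yule expectations of both the Sackin index and such node-sums are known (and can be recovered by the recursive argument below), yielding a closed form in $H_n$.

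First I would set up the standard recursive description of the Yule model: a Yule tree with $n$ leaves is obtained from a Yule tree with $n-1$ leaves by picking a uniformly random leaf and splitting it. Under this coupling, if $Y$ is any statistic expressible through the depths $\delta_T(i)$ and the subtree-leaf-counts $\ell_T(v)$, one gets a recurrence for $E_Y(Y)$ on $n$ in terms of $E_Y(\cdot)$ on $n-1$. For the quadratic statistic $\Phi(T) := \sum_{i<j} d_T(i,j)^2$ the cleanest route is probably to introduce auxiliary statistics — e.g. the Sackin index $S(T)=\sum_i\delta_T(i)$, its square $S(T)^2$, the sum of squared depths $\sum_i \delta_T(i)^2$, and mixed terms like $\sum_{i<j}\delta_T(i)\delta_T(j)$ and the "cophenetic" sums $\sum_{i<j}\delta_T(\mathrm{lca}(i,j))$ and its square — so that $\Phi$ becomes a linear combination of objects each of which satisfies a manageable linear recurrence under the split operation. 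One then solves this coupled system of recurrences, which after telescoping produces expressions involving $H_n$, $H_n^{(2)}$, and polynomials in $n$. I would organize this as a sequence of lemmas, one per auxiliary expectation, proving each by the split recurrence, and finally assemble $E_Y(\Phi)$ and substitute into the displayed formula for $E_Y(N_n^2)$.

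The main obstacle I expect is the bookkeeping in the second-moment computation: when a leaf $\ell$ at depth $\delta$ is split, every pairwise distance involving the two new leaves, and every cross-product of such a distance with another distance, must be accounted for, which generates many interacting sums, and the recurrences couple $\sum_i\delta_T(i)^2$, $\sum_{i<j}\delta_T(i)\delta_T(j)$, and lca-depth sums together. Getting the inhomogeneous terms in these recurrences exactly right — and then summing telescoping series of the form $\sum_k \tfrac{1}{k}\cdot(\text{polynomial})$, which is where the $H_n^2$ and $H_n H_n^{(2)}$ type terms arise — is the delicate part; a useful check is that setting things up correctly must reproduce the known Sackin-index Yule mean $E_Y(S)=2n(H_n-1)$ and, in the appropriate limit or cross-check, be consistent with the Mir--Rosselló uniform-model result. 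Once the auxiliary expectations are in hand, the final combination is routine algebra, and collecting terms should yield exactly the stated closed form with the $\frac{2n}{n-1}$ prefactor.
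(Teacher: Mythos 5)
Your reduction formula is wrong in a way that would propagate to the final answer. Expanding the square and using independence of $T$ and $T'$, the cross term is
$$
-2\sum_{1\leq i<j\leq n}\Big(\sum_{T}d_T(i,j)\,p_Y(T)\Big)\Big(\sum_{T'}d_{T'}(i,j)\,p_Y(T')\Big)
=-2\sum_{1\leq i<j\leq n}\Big(\sum_{T}d_T(i,j)\,p_Y(T)\Big)^2,
$$
i.e.\ a \emph{sum over pairs of the squares of the per-pair expectations}, not the square of the summed expectation. By exchangeability of the leaf labels under the Yule model each per-pair expectation equals $E_Y(D_n)/\binom{n}{2}$, so the cross term is $-2\binom{n}{2}\big(E_Y(D_n)/\binom{n}{2}\big)^2=-2E_Y(D_n)^2/\binom{n}{2}$, whereas you wrote $-2E_Y(D_n)^2$. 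The discrepancy is a factor of $\binom{n}{2}$ and is not cosmetic: since $E_Y(D_n)=2n(n+1)H_n-4n^2$, your version would make $E_Y(N_n^2)$ of order $-n^4\ln^2 n$ (in particular negative), instead of the correct $O(n^2\ln n)$. The correct identity, which is the paper's Proposition~\ref{prop:EN2}, is $E_Y(N_n^2)=2\big(E_Y(D_n^{(2)})-E_Y(D_n)^2/\binom{n}{2}\big)$.

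Apart from this, your treatment of the hard part --- computing $E_Y(D_n^{(2)})$ --- is only a plan, and none of the required closed forms is actually derived, so the stated formula is not yet established. The route you sketch (grow the tree by splitting a uniformly random leaf and track $S$, $S^2$, $\sum_i\delta_T(i)^2$, $\sum_{i<j}\delta_T(i)\delta_T(j)$ and lca-depth sums) can be made to work, but it is heavier than necessary: the paper instead decomposes a Yule tree as $T\,\widehat{\ }\,T'$ into its two root subtrees (each split size $k$ occurring with probability $1/(n-1)$), under which $d(i,j)=\delta_T(i)+\delta_{T'}(j)+2$ for leaves in different subtrees, so no lca bookkeeping is needed and only the Sackin index $S$ and $S^{(2)}=\sum_i\delta(i)^2$ enter the recurrence; the resulting first-order recurrences then telescope against standard harmonic-number identities. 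If you fix the cross term and carry out one of these computations to an explicit formula for $E_Y(D_n^{(2)})$, the final assembly is indeed routine.
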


To prove this formula, we shall use the following auxiliary random variables:
\begin{itemize}

\item $D_n$ is the random variable that chooses a tree $T\in \BT_n$ and computes
$D (T)=\hspace*{-2ex} \sum\limits_{1\leq i< j\leq n}d_T(i,j)$.

\item $D_n^{(2)}$ is the random variable that chooses a tree $T\in \BT_n$ and computes
$D^{(2)}(T)=\hspace*{-2ex}\sum\limits_{1\leq i< j\leq n}d_T(i,j)^2$.
\end{itemize}

The connection between $E_Y(N_n^2)$ and the expected values under the Yule model of $D_n,D_n^{(2)}$ is given by the following result.

\begin{proposition}\label{prop:EN2}
$E_Y(N_n^2)=2\big(E_Y(D^{(2)}_n)- E_Y(D_n)^2/\binom{n}{2}\big).$
\end{proposition}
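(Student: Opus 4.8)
The plan is to expand the square defining $d_\nu(T,T')^2$, take expectations term by term, and exploit two things: that $T$ and $T'$ are chosen independently, and that the Yule distribution is invariant under relabeling of leaves. For fixed $T,T'\in\BT_n$ we have
$$
d_\nu(T,T')^2=\sum_{1\le i<j\le n}d_T(i,j)^2-2\sum_{1\le i<j\le n}d_T(i,j)\,d_{T'}(i,j)+\sum_{1\le i<j\le n}d_{T'}(i,j)^2 .
$$
Applying $E_Y$ and using linearity of expectation, the first and third sums each contribute $E_Y(D_n^{(2)})$, because $T$ and $T'$ are identically distributed. Hence it only remains to prove that
$$
E_Y\Big(\sum_{1\le i<j\le n}d_T(i,j)\,d_{T'}(i,j)\Big)=\frac{E_Y(D_n)^2}{\binom n2}.
$$
Since $T$ and $T'$ are chosen independently, the left-hand side equals $\sum_{1\le i<j\le n}E_Y(d_T(i,j))\cdot E_Y(d_{T'}(i,j))=\sum_{1\le i<j\le n}E_Y(d_T(i,j))^2$, where $E_Y(d_T(i,j))$ denotes the expected value of the random variable $T\mapsto d_T(i,j)$ on $\BT_n$ under the Yule model.

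The key observation is then that $E_Y(d_T(i,j))$ does not depend on the pair $\{i,j\}$. Indeed, from the formula for $P_Y(T)$ recalled in Section~\ref{sec:prel} we get $P_Y(T)=P_Y(T^\sigma)$ for every $T\in\BT_n$ and every permutation $\sigma$ of $\{1,\ldots,n\}$, where $T^\sigma$ denotes the tree obtained from $T$ by relabeling each leaf $k$ as $\sigma(k)$: relabeling changes neither $V_{int}(T)$ nor the numbers $\ell_T(v)$. Since $d_{T^\sigma}(\sigma(i),\sigma(j))=d_T(i,j)$, this gives $E_Y(d_T(i,j))=E_Y(d_T(\sigma(i),\sigma(j)))$, and as the symmetric group acts transitively on the $2$-element subsets of $\{1,\ldots,n\}$, all these numbers coincide; call their common value $c_n$. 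Summing over all pairs yields $E_Y(D_n)=\binom n2\,c_n$, so $c_n=E_Y(D_n)/\binom n2$ and therefore $\sum_{1\le i<j\le n}E_Y(d_T(i,j))^2=\binom n2\,c_n^2=E_Y(D_n)^2/\binom n2$. Plugging this back in gives $E_Y(N_n^2)=2E_Y(D_n^{(2)})-2E_Y(D_n)^2/\binom n2$, which is the claim.

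I do not expect any serious obstacle in this proposition: the whole argument is the binomial expansion together with linearity and independence, and the one point deserving a careful word is the exchangeability of leaf labels under the Yule model, which is precisely what allows the substitution $\sum_{i<j}E_Y(d_T(i,j))^2=E_Y(D_n)^2/\binom n2$. The substantive work lies elsewhere, namely in the subsequent explicit computation of $E_Y(D_n)$ and $E_Y(D_n^{(2)})$, which this proposition reduces the theorem to.
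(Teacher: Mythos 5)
Your proof is correct and follows essentially the same route as the paper: binomial expansion of the square, independence of $T$ and $T'$ to factor the cross term, and exchangeability of leaf labels to replace $\sum_{i<j}E_Y(d_T(i,j))^2$ by $\binom{n}{2}\big(E_Y(D_n)/\binom{n}{2}\big)^2$. If anything, you make explicit (via the invariance of $P_Y$ under relabeling) a symmetry step that the paper uses without comment.
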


\begin{proof}
Let us develop $E_Y(N_n^2)$ from its raw definition:
$$
\begin{array}{l}
\displaystyle E_Y(N_n^2)=\hspace*{-1ex}\sum_{T,T'\in \BT_n} d_{\nu}(T,T')^2p_Y(T)p_Y(T')=\hspace*{-1ex}\sum_{T,T'\in \BT_n} \Big(\sum_{1\leq i< j\leq n} (d_T(i,j)-d_{T'}(i,j))^2\Big)p_Y(T)p_Y(T')\\
\quad \displaystyle 
=\hspace*{-1ex}\sum_{1\leq i < j\leq n}\sum_{T,T'} (d_T(i,j)^2+d_{T'}(i,j)^2-2d_T(i,j)d_{T'}(i,j))p_Y(T)p_Y(T')
\\
\quad \displaystyle 
=\hspace*{-1ex}\sum_{1\leq i< j\leq n}\Big(\sum_{T,T'}d_T(i,j)^2 p_Y(T)p_Y(T')+\sum_{T,T'} d_{T'}(i,j)^2p_Y(T)p_Y(T')\\
\quad\qquad\qquad\qquad \displaystyle
-2  \sum_{T,T'} d_T(i,j)d_{T'}(i,j)p_Y(T)p_Y(T')\Big)\\
\quad \displaystyle 
=\hspace*{-1ex}\sum_{1\leq i< j\leq n}\hspace*{-1ex}\Big(\sum_{T}d_T(i,j)^2 p_Y(T)+\sum_{T'} d_{T'}(i,j)^2p_Y(T')-2\Big(\sum_{T}d_T(i,j) p_Y(T)\Big)\Big(\sum_{T'} d_{T'}(i,j) p_Y(T')\Big)\Big)\\
\quad \displaystyle 
=\hspace*{-1ex}\sum_{1\leq i< j\leq n}\Big(2\sum_{T}d_T(i,j)^2 p_Y(T)
-2\Big(\sum_{T}d_T(i,j) p_Y(T)\Big)^2\Big)\\
\quad \displaystyle 
=2\sum_{T} \Big(\sum_{1\leq i< j\leq n} d_T(i,j)^2\Big) p_Y(T) -2\sum_{1\leq i< j\leq n}
\Big(\sum_{T}d_T(i,j) p_Y(T)\Big)^2\\
\quad \displaystyle =
2E_Y(D_n^{(2)})-2 \binom{n}{2}\Big(\sum_{T}{d}_T(1,2) p_Y(T)\Big)^2
\end{array}
$$
and now
$$
E_Y(D_n) =\sum_{T\in \BT_n} \sum_{1\leq i<j\leq n}d_T(i,j)p_Y(T)= \sum_{1\leq i<j\leq n}\sum_{T} d_T(i,j)p_Y(T) =
\binom{n}{2}\sum_{T} d_T(1,2)p_Y(T)
$$
from where we deduce that 
$\Big(\sum\limits_{T}{d}_T(1,2) p_Y(T)\Big)^2= {E_Y(D_n)^2}/{\binom{n}{2}^2}$,
and the formula in the statement follows.
\end{proof}

Now, it is known that the expected value under the Yule model of $D_n$ is  
$$
E_Y(D_n)=2n(n+1)H_n-4n^2\qquad \mbox{\cite{MRR}}.
$$  
As far as $E_Y(D_n^{(2)})$ goes, its value is given by the following result. We postpone the proof until the appendix at the end of the paper. 

\begin{theorem}\label{EY:D2}
$E_Y(D_n^{(2)})= 8n(n+1)(H_{n}^2-H_{n}^{(2)})-2n(15n+7)H_n+45n^2-n$
\end{theorem}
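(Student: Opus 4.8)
The plan is to compute $E_Y(D_n^{(2)})$ by conditioning on the structure near the leaves $1$ and $2$, in the spirit of the recursive computation of $E_Y(D_n)$ in \cite{MRR}. Write $D^{(2)}(T)=\binom{n}{2}\,\mathbb{E}\big[d_T(i,j)^2\big]$ after symmetrizing over the (uniformly random) labelling, so that $E_Y(D_n^{(2)})=\binom{n}{2}\,E_Y\big(\delta(1,2)^2\big)$, where $\delta(1,2)=d_T(1,2)$ for a Yule tree $T\in\BT_n$. The key object is therefore the distribution — or at least the first two moments — of the distance $d_T(1,2)$ between two fixed leaves of a Yule tree. A classical fact about the Yule model is that $d_T(1,2)$ can be described via the "random join" / Yule--Harding growth process: starting from two leaves, at step $k$ (when the tree has $k$ leaves) each existing leaf is split with probability $1/k$, and $d_T(1,2)$ increases by $1$ each time one of the two distinguished lineages is the one that gets subdivided. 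Concretely, if $X_k\in\{0,1\}$ indicates that at the step taking $k$ leaves to $k+1$ leaves one of the two marked lineages is split, then $d_T(1,2)=2+\sum_{k=3}^{n-1}X_k$ (with the appropriate indexing), and the $X_k$ are independent with $\Pr[X_k=1]=2/k$ for the relevant range. From here one gets $E_Y(d_T(1,2))$ as a sum $\sum 2/k$, reproducing $E_Y(D_n)=2n(n+1)H_n-4n^2$, and $E_Y(d_T(1,2)^2)$ from $\mathrm{Var}+\text{mean}^2$ using independence, which produces sums of the shapes $\sum 1/k$, $\sum 1/k^2$, and products, i.e.\ exactly the $H_n$, $H_n^{(2)}$, $H_n^2$ terms appearing in the statement.

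Carrying this out, I would first make the lineage-splitting description rigorous: show that, under $P_Y$ on $\BT_n$, the pair of lineages ancestral to two fixed leaves behaves as claimed, which is most cleanly done by induction on $n$ using the recursive description of $P_Y(T)$ in terms of $\ell_T(v)-1$ (equivalently, that a Yule tree on $n$ leaves is obtained from one on $n-1$ leaves by splitting a uniformly chosen leaf). Then I would identify the increment probabilities: when the tree has $k$ leaves, the probability that the next split lands on one of the two marked lineages is $2/k$, independently across steps. Then $E_Y(d_T(1,2))=\sum_{k=2}^{n-1} 2/k = 2(H_{n-1}-1)$ — note the need to reconcile the index range so that this equals $E_Y(D_n)/\binom{n}{2}=\big(2n(n+1)H_n-4n^2\big)/\binom{n}{2}$, which is a routine check. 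Finally, $E_Y(d_T(1,2)^2)=\big(\sum_k 2/k\big)^2+\sum_k \tfrac{2}{k}\big(1-\tfrac{2}{k}\big)$, and multiplying by $\binom{n}{2}=n(n-1)/2$ and simplifying with standard harmonic-number identities (e.g.\ $H_{n-1}=H_n-1/n$, $H_{n-1}^{(2)}=H_n^{(2)}-1/n^2$) should collapse to $8n(n+1)(H_n^2-H_n^{(2)})-2n(15n+7)H_n+45n^2-n$.

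The main obstacle I anticipate is not conceptual but bookkeeping: getting the index ranges and boundary terms exactly right, since an off-by-one in the splitting process propagates into wrong coefficients after multiplication by $\binom n2$, and the final simplification requires carefully converting between $H_{n-1}$-expressions and $H_n$-expressions (and likewise for the power-$2$ harmonic numbers) so that the answer matches the stated polynomial-in-$n$ coefficients. A secondary subtlety is justifying the independence of the indicators $X_k$ rigorously rather than heuristically — this is where conditioning on the whole growth history, rather than just the leaf count, may be cleanest. If the direct "two-lineage" argument proves awkward to make fully rigorous, an alternative is to mimic the recursion used for $E_Y(D_n)$ in \cite{MRR}: derive a recurrence for $E_Y(D_n^{(2)})$ by conditioning on the first split of the root (which partitions the $n$ leaves into a Yule subtree of size $a$ and one of size $n-a$, with the known Yule split distribution $\Pr[a]=1/(n-1)$ for $1\le a\le n-1$), express $D^{(2)}$ of the big tree in terms of $D^{(2)}$, $D$, and leaf-count statistics (such as the Sackin index, whose Yule expectation is known) of the two parts, and then solve the resulting linear recurrence. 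Either route ultimately reduces to the same harmonic-number algebra; I would present whichever makes the independence/linearity step shortest, and verify the closed form against small cases $n=2,3,4$ as a sanity check before finalizing.
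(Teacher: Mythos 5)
Your primary route contains a genuine error. The claim that $d_T(1,2)=2+\sum_k X_k$ with independent indicators satisfying $\Pr[X_k=1]=2/k$ for \emph{all} $k$ from the start of the growth process ignores the random time at which the two lineages ancestral to leaves $1$ and $2$ diverge. Before divergence there is only \emph{one} ancestral leaf, and a split of it does not increase $d_T(1,2)$ at all (and is chosen with probability $1/k$, not $2/k$); only after the common ancestor splits the two labels apart do the ``probability $2/k$, add $1$'' increments begin. Since the divergence time is genuinely random (e.g.\ the lca of $1$ and $2$ is the root only with probability $(n+1)/(3(n-1))$), your model gives $E_Y(d_T(1,2))\sim 2\ln n$, whereas the true value, $E_Y(D_n)/\binom{n}{2}=\big(4(n+1)H_n-8n\big)/(n-1)\sim 4\ln n$, already disagrees at $n=3$: the correct mean pairwise distance is $8/3$, while $2+\sum_{k=3}^{2}(\cdot)=2$. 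The exchangeability issue you flag as a ``secondary subtlety'' is in fact the fatal one: the ancestor at time $k$ of two uniformly chosen \emph{final} leaves is a size-biased, not uniform, leaf of the $k$-leaf tree, so the forward two-lineage description cannot be used as stated. Repairing it requires carrying the joint law of the divergence time and the subsequent increments, which is a substantially different (and harder) computation than the one you sketch.

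Your fallback route --- conditioning on the root split and solving a linear recurrence --- is exactly what the paper does, but as sketched it omits the essential intermediate ingredients. The decomposition of $D^{(2)}(T\,\widehat{\ }\,T')$ brings in not only the Sackin index $S$ of the two subtrees but also the sum of \emph{squared} leaf depths $S^{(2)}(T)=\sum_i\delta_T(i)^2$, whose Yule expectation $E_Y(S_n^{(2)})=4n(H_n^2-H_n^{(2)})-6n(H_n-1)$ is not in the literature and must itself be obtained from a separate recursion before the main one can be closed. The resulting recurrence for $E_Y(D_n^{(2)})$ also requires the convolution identity $\sum_{k=1}^{n-1}kH_kH_{n-k}=\binom{n+1}{2}(H_{n+1}^2-H_{n+1}^{(2)}-2H_{n+1}+2)$ and Abel-summation evaluations of $\sum k(H_k^2-H_k^{(2)})$, none of which are ``standard harmonic-number identities'' on the level of $H_{n-1}=H_n-1/n$. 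So the viable half of your proposal is the right skeleton but would need all of this filled in before it constitutes a proof.
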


Then, replacing in the expression for $E_Y(N_n^2)$ given in Proposition \ref{prop:EN2} , $E_Y(D_n)$ and $E_Y(D_n^{(2)})$ by their values, we obtain the formula for $E_Y(N_n^2)$ given in Theorem \ref{th:main}.

\section{Conclusions}

In this paper we have computed the expected value $E_Y(N_n^2)$ of the square of the path-difference metric for rooted fully resolved phylogenetic trees under the Yule model:
$$
E_Y(N_n^2)=\frac{2n}{n-1}\big(2(n^2+24n+7)H_n+13n^2-46n+1-16(n+1)H_n^2-8(n^2-1)H_n^{(2)}\big).
$$
This complements the computation of this expected value under the uniform distribution carried out in   \cite{MirR10}, which turned out to be
$$
E_U(N_n^2)=2\binom{n}{2}\left(4(n-1)+2-\frac{2^{2(n-1)}}{\binom{2(n-1)}{n-1}}-\left(\frac{2^{2(n-1)}}{\binom{2(n-1)}{n-1}}\right)^2\right)
$$

The proof of the formula for $E_Y(N_n^2)$ consists of several long algebraic manipulations of sums of sequences. Since it is not difficult to slip some mistake in such long algebraic computations, to double-check our result we have directly computed the value of $E_Y(N_n^2)$ for $n=3,\ldots,7$ and confirmed that our formula gives the right figures. The Python scripts used to compute them and the results obtained are available in the Supplementary Material web page \url{http:/bioinfo.uib.es/~recerca/phylotrees/nodaldistYule/}.

The  formulas for $E_Y(N_n^2)$ and $E_U(N_n^2)$ grow in different orders: $E_Y(N_n^2)$ is in $O(n^2\ln(n))$, while $E_U(N_n^2)$ is in $O(n^3)$. Therefore, they can be used to test the Yule and the uniform models as null stochastic models of evolution for collections of phylogenetic trees reconstructed by different methods. This kind of analysis has only been performed so far through shape indices of single trees, not by means of the comparison of pairs of trees. We shall report on it elsewhere.

\section*{Acknowledgements}  The research reported in this paper has been partially supported by the Spanish government and the UE FEDER program, through projects MTM2009-07165 and TIN2008-04487-E/TIN.  We thank J. Mir\'o for several comments on a previous version of this work.

\section*{Appendix}

In this appendix we prove Proposition  \ref{EY:D2}, as well as of some preliminary lemmas.
To begin with, the following identities  on harmonic numbers will be systematically used in the next proofs, usually without any further notice.

\begin{nlemma}\label{lem:sumharm}
For every $n\geq 2$:
\begin{enumerate}[(1)]
\item $\displaystyle \sum_{k=1}^{n-1} H_k=n(H_{n}-1)$

\item $\sum\limits_{k=1}^{n-1} kH_k=\frac{1}{4}n(n-1)(2H_{n}-1)$

\item $\sum\limits_{k=1}^{n-1} {H_k}/({k+1})=\frac{1}{2}(H_n^2-H_n^{(2)})$

\item $\sum\limits_{k=1}^{n-1} kH_kH_{n-k}=\binom{n+1}{2}(H_{n+1}^2-H_{n+1}^{(2)}-2H_{n+1}+2)$

\item $ \sum\limits_{k=1}^{n-1} (H_k^2-H_k^{(2)})=n(H_{n}^2-H_n^{(2)})-2n(H_{n}-1)$

\item $\sum\limits_{k=1}^{n-1} k(H_k^2-H_k^{(2)})=\binom{n}{2}(H_n^2-H_n^{(2)})-\frac{1}{4}n(n-1)(2H_{n}-1)$
\end{enumerate}
\end{nlemma}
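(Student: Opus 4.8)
The plan is to establish the six identities essentially independently, using only two elementary techniques — interchanging the order of summation, and induction on $n$ — and reserving a short generating-function computation for the convolution identity (4), which is the only genuinely delicate one. The natural dependency order is (1), (2), (3), then (5), (6), and finally (4).

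For (1), (2) and (3) I would start from $H_k=\sum_{i=1}^{k}1/i$ and swap the two sums. For (1) this gives $\sum_{k=1}^{n-1}H_k=\sum_{i=1}^{n-1}\frac1i\sum_{k=i}^{n-1}1=\sum_{i=1}^{n-1}\frac{n-i}{i}=nH_{n-1}-(n-1)$, and rewriting $H_{n-1}=H_n-1/n$ yields $n(H_n-1)$. Identity (2) is the same computation with the inner sum replaced by $\sum_{k=i}^{n-1}k=\binom n2-\binom i2$. For (3) I would instead expand a square: $H_n^2=\big(\sum_{i=1}^{n}1/i\big)^2=H_n^{(2)}+2\sum_{1\le i<j\le n}\frac1{ij}=H_n^{(2)}+2\sum_{j=2}^{n}\frac{H_{j-1}}{j}$, and then reindex $j=k+1$. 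Identity (3) will moreover be the engine for the generating-function step below.

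Identities (5) and (6) I would prove by induction on $n$, with trivial base case $n=2$. The inductive step uses only $H_{n+1}=H_n+\frac1{n+1}$ and $H_{n+1}^{(2)}=H_n^{(2)}+\frac1{(n+1)^2}$, which give the clean increment $(H_{n+1}^2-H_{n+1}^{(2)})-(H_n^2-H_n^{(2)})=\frac{2H_n}{n+1}$; substituting the proposed closed forms and simplifying is then immediate. Alternatively, (5) and (6) drop out of Abel summation applied to the partial sums of $H_k/(k+1)$, together with (1)–(3).

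The one step needing a real argument is the convolution (4). Here I would first symmetrize: the substitution $k\mapsto n-k$ gives $\sum_{k=1}^{n-1}kH_kH_{n-k}=\sum_{k=1}^{n-1}(n-k)H_kH_{n-k}$, so $2\sum_{k=1}^{n-1}kH_kH_{n-k}=n\sum_{k=1}^{n-1}H_kH_{n-k}$ and it suffices to evaluate $C_n:=\sum_{k=1}^{n-1}H_kH_{n-k}$. From $\sum_{n\ge1}H_nx^n=-\ln(1-x)/(1-x)$ one gets $\sum_{n\ge2}C_nx^n=\ln^2(1-x)/(1-x)^2$. Since $\frac{d}{dx}\ln^2(1-x)=\frac{-2\ln(1-x)}{1-x}=2\sum_{n\ge1}H_nx^n$, integrating, reindexing and applying (3) yields $\ln^2(1-x)/(1-x)=\sum_{n\ge1}(H_n^2-H_n^{(2)})x^n$; one more factor $1/(1-x)$ then gives $C_n=\sum_{m=1}^{n}(H_m^2-H_m^{(2)})$, which by (5) (applied with $n+1$ in place of $n$) equals $(n+1)(H_{n+1}^2-H_{n+1}^{(2)}-2H_{n+1}+2)$. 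Hence $\sum_{k=1}^{n-1}kH_kH_{n-k}=\tfrac n2 C_n=\binom{n+1}{2}(H_{n+1}^2-H_{n+1}^{(2)}-2H_{n+1}+2)$. The main obstacle is exactly the bookkeeping in this last chain: one must either accept the formal power-series manipulation above, or replace it by a direct induction on $n$ for $C_n$, which works but requires care in checking the increment. Either way, (4) rests on (3) and (5).
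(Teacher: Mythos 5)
Your proposal is correct, and all six identities check out (I verified the inductive increments for (5)--(6) and the generating-function chain for (4) against small cases). It is, however, a genuinely different route from the paper's. The paper only proves (5) and (6) itself, via Abel's summation-by-parts lemma with $a_k=k$ (resp.\ $a_k=\binom{k}{2}$) and $b_k=H_k^2-H_k^{(2)}$, using the same increment $b_{k+1}-b_k=2H_k/(k+1)$ that drives your induction; it dispatches (1)--(3) as ``well known'' with references to Knuth, and --- crucially --- it does not prove (4) at all, citing instead Theorem~2 of Wei--Gong--Wang. Your treatment makes everything self-contained: the summation-interchange arguments for (1)--(2) and the square-expansion for (3) are routine replacements for the citations, and your inductions for (5)--(6) are essentially the telescoping hidden inside Abel's lemma, so nothing is gained or lost there. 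The real added value is your proof of (4): symmetrizing to reduce $\sum_{k=1}^{n-1}kH_kH_{n-k}$ to $\frac{n}{2}\sum_{k=1}^{n-1}H_kH_{n-k}$, then identifying the latter convolution as the partial sums of $H_m^2-H_m^{(2)}$ via $\ln^2(1-x)/(1-x)^2$, and closing with (5). This costs a formal power-series manipulation (or, as you note, a careful direct induction on $C_n$) but removes the paper's only external dependency for this lemma; the paper's approach buys brevity at the price of sending the reader to three references.
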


\begin{proof}
Identities (1)--(3) are well known and easily proved by induction on $n$: see, for instance,  \cite[\S 6.3, 6.4]{Knuth2} and  \cite[\S 1.2.7]{Knuth1}. Identity (4) is proved in \cite[Thm. 2]{WGW}. We shall prove (5) and (6) using the technique introduced in \cite{CHJ}. The main ingredient is \emph{Abel's lemma on summation by parts}: for every two sequences $(a_k)_k$ and $(b_k)_k$,
$$
\sum_{k=1}^{n-1}(a_{k+1}-a_k)b_k=-\sum_{k=1}^{n-1}(b_{k+1}-b_k)a_{k+1}+a_nb_n-a_1b_1.
$$
To prove (5), take $a_k=k$ and $b_k=H_k^2-H_k^{(2)}$, so that $a_{k+1}-a_k=1$ and
$b_{k+1}-b_k={2H_k}/({k+1})$. Then, by Abel's lemma
$$
\sum_{k=1}^{n-1}(H_k^2-H_k^{(2)})  =-\sum_{k=1}^{n-1}(k+1)\frac{2H_k}{k+1}+n(H_n^2-H_n^{(2)})=n(H_n^2-H_n^{(2)})-2\sum_{k=1}^{n-1}H_k=n(H_n^2-H_n^{(2)})-2n(H_n-1).
$$
To prove (6), take $a_k=\binom{k}{2}$, so that $a_{k+1}-a_k=k$, and  $b_k=H_k^2-H_k^{(2)}$.
Then, again by Abel's lemma,
$$
\begin{array}{rl}
\displaystyle \sum_{k=1}^{n-1}k(H_k^2-H_k^{(2)}) & \displaystyle =-\sum_{k=1}^{n-1}\binom{k+1}{2}\frac{2H_k}{k+1}+\binom{n}{2}(H_n^2-H_n^{(2)})\\
& \displaystyle =\binom{n}{2}(H_n^2-H_n^{(2)})-2\sum_{k=1}^{n-1}kH_k=\binom{n}{2}(H_n^2-H_n^{(2)})-\frac{1}{4}n(n-1)(2H_{n}-1).
\end{array}
$$
\end{proof}

Let us consider now the following two  random variables:

\begin{itemize}

\item $S_n$, that chooses a tree $T\in \BT_n$ and computes  its \emph{Sackin index} \cite{Sackin:72}
$S(T)=\sum\limits_{i=1}^n\delta_T(i)$.

\item $S_n^{(2)}$, that chooses a tree $T\in \BT_n$ and computes
$S^{(2)}(T)=\hspace*{-2ex}\sum\limits_{1\leq i< j\leq n}\delta_T(i)^2$.
\end{itemize}

It is known that the expected value under the Yule model of $S_n$ is
$$
E_Y(S_n)=2n(H_n-1)\qquad\qquad  \mbox{\cite{KiSl:93}}.
$$ 
We shall compute now  the expected values under this model of $S_n^{(2)}$ and $D_n^{(2)}$: the first will be used in the computation of the second. To do this, we shall use the following recursive expressions for  $S^{(2)}(T\,\widehat{\ }\, T')$ and $D^{(2)}(T\,\widehat{\ }\, T')$.

\begin{nlemma}\label{rec}
Let $T,T'$ be two  phylogenetic  trees on disjoint sets of taxa $S,S'$, with $|S|=k$ and $|S'|=n-k$.
Then:
\begin{enumerate}[(1)]
\item  $S^{(2)}(T\,\widehat{\ }\, T')=S^{(2)}(T)+S^{(2)}(T')+ 2(S(T)+S(T'))+n$

\item $D^{(2)}(T\,\widehat{\ }\, T')  =D^{(2)}(T)+D^{(2)}(T')+
(n-k)(S^{(2)}(T)+4S(T)) +k(S^{(2)}(T')+4S(T'))+2S(T)S(T')+4k(n-k)$
\end{enumerate}
\end{nlemma}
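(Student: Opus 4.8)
The statement is a pair of purely combinatorial identities about how the two quadratic functionals $S^{(2)}$ and $D^{(2)}$ behave under the operation $T\,\widehat{\ }\,T'$ of joining two phylogenetic trees at a new common root. The proof is a bookkeeping argument: I would fix notation for the combined tree, classify the leaves and pairs of leaves of $T\,\widehat{\ }\,T'$ according to whether they come from the $T$-side or the $T'$-side, express each depth or pairwise distance in the big tree in terms of the corresponding quantity in the subtree, and then sum. The one fact I need throughout is that attaching a new root above $T$ increases the depth of every leaf of $T$ by exactly $1$; that is, if $i$ is a leaf of $T$ then $\delta_{T\,\widehat{\ }\,T'}(i)=\delta_T(i)+1$, and similarly for leaves of $T'$. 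Distances between two leaves on the same side are unchanged, $d_{T\,\widehat{\ }\,T'}(i,j)=d_T(i,j)$, while the distance between a leaf $i$ on the $T$-side and a leaf $j$ on the $T'$-side is $d_{T\,\widehat{\ }\,T'}(i,j)=\delta_T(i)+\delta_{T'}(j)+2$, since the path goes up to the root of $T$, up one more arc to the new root, down one arc to the root of $T'$, and down to $j$.

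\textbf{Proof of (1).} Write $S^{(2)}(T\,\widehat{\ }\,T')=\sum_{i}\delta_{T\,\widehat{\ }\,T'}(i)^2$, the sum running over all $n$ leaves. Split the sum into the $k$ leaves of $T$ and the $n-k$ leaves of $T'$. For a leaf $i$ of $T$ we have $\delta_{T\,\widehat{\ }\,T'}(i)^2=(\delta_T(i)+1)^2=\delta_T(i)^2+2\delta_T(i)+1$, and summing over the leaves of $T$ gives $S^{(2)}(T)+2S(T)+k$; the $T'$-side contributes $S^{(2)}(T')+2S(T')+(n-k)$. Adding, the two constants combine to $k+(n-k)=n$, which yields identity (1).

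\textbf{Proof of (2).} Now $D^{(2)}(T\,\widehat{\ }\,T')=\sum_{1\le i<j\le n}d_{T\,\widehat{\ }\,T'}(i,j)^2$, and I split the pairs into three groups: both in $T$, both in $T'$, one in each. The first group contributes $\sum_{i<j\in S}d_T(i,j)^2=D^{(2)}(T)$ and the second $D^{(2)}(T')$. For the mixed pairs, with $i\in S$ and $j\in S'$, I expand
$$
d_{T\,\widehat{\ }\,T'}(i,j)^2=(\delta_T(i)+\delta_{T'}(j)+2)^2=\delta_T(i)^2+\delta_{T'}(j)^2+4+2\delta_T(i)\delta_{T'}(j)+4\delta_T(i)+4\delta_{T'}(j).
$$
Summing over all $k(n-k)$ such pairs, $\sum_{i\in S,\,j\in S'}\delta_T(i)^2=(n-k)S^{(2)}(T)$ (each $\delta_T(i)^2$ appears $n-k$ times), similarly $\sum\delta_{T'}(j)^2=k\,S^{(2)}(T')$; the constant term gives $4k(n-k)$; the cross term factors as $2\bigl(\sum_{i\in S}\delta_T(i)\bigr)\bigl(\sum_{j\in S'}\delta_{T'}(j)\bigr)=2S(T)S(T')$; and the last two give $4(n-k)S(T)+4k\,S(T')$. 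Collecting all contributions produces exactly the right-hand side of identity (2), completing the proof.

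\textbf{Main obstacle.} There is no real obstacle here — the argument is routine once the three elementary facts about depths and distances in $T\,\widehat{\ }\,T'$ are in hand. The only thing to be careful about is the combinatorics of how many times each term is counted in the mixed-pair sum (a factor $n-k$ for quantities attached to $T$-leaves, a factor $k$ for $T'$-leaves), and keeping the roles of $T$ and $T'$, and of $k$ and $n-k$, straight so that they end up on the correct sides of the asymmetric-looking formula in (2). Everything else is a direct expansion of squares.
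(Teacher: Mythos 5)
Your proof is correct and follows essentially the same route as the paper's: both split the leaves (resp.\ pairs of leaves) according to which side of the new root they lie on, use $\delta_{T\,\widehat{\ }\,T'}(i)=\delta_T(i)+1$ and $d_{T\,\widehat{\ }\,T'}(i,j)=\delta_T(i)+\delta_{T'}(j)+2$ for mixed pairs, and expand the squares, tracking the multiplicities $n-k$ and $k$. Nothing further is needed.
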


\begin{proof}
Let us assume, without any loss of generality, that $S=\{1,\ldots,k\}$ and $S'=\{k+1,\ldots,n\}$ . 
Then, as far as (1) goes, we have that
$$
\delta_{T\,\widehat{\ }\, T'}(i)^2=\left\{
\begin{array}{ll}
(\delta_T(i)+1)^2 & \mbox{ if $1\leq i\leq k$}\\
(\delta_{T'}(i)+1)^2 & \mbox{ if $k+1\leq i\leq n$}
\end{array}
\right.
$$
and therefore
$$
\begin{array}{l}
S^{(2)}(T\,\widehat{\ }\, T') \displaystyle =\sum_{i=1}^{n}  \delta_{T\,\widehat{\ }\, T'}(i)^2=
\sum_{i=1}^{k}  (\delta_T(i)+1)^2+\sum_{i=k+1}^{n}  (\delta_{T'}(i)+1)^2\\
\quad \displaystyle=
\sum_{i=1}^{k}  (\delta_T(i)^2+2\delta_T(i)+1) +\sum_{i=k+1}^{n}  (\delta_{T'}(i)^2+2\delta_{T'}(i)+1)= S^{(2)}(T)+2S(T)+S^{(2)}(T')+2S(T')+n.
\end{array}
$$

As far as (2) goes, we have that
$$
d_{T\,\widehat{\ }\, T'}(i,j)^2=\left\{
\begin{array}{ll}
d_T(i,j)^2 & \mbox{ if $1\leq i<j\leq k$}\\
d_{T'}(i,j)^2 & \mbox{ if $k+1\leq i<j\leq n$}\\
(\delta_T(i)+\delta_{T'}(j)+2)^2 & \mbox{ if $1\leq i\leq k<j\leq n$}
\end{array}
\right.
$$
and therefore
$$
\begin{array}{l}
\displaystyle 
D^{(2)}(T\,\widehat{\ }\, T')=\hspace*{-2ex}\sum_{1\leq i< j\leq n}\hspace*{-2ex} d_{T\,\widehat{\ }\, T'}(i,j)^
2=
\hspace*{-2ex}\sum_{1\leq i< j\leq k}\hspace*{-2ex} d_{T}(i,j)^2+\hspace*{-2ex}\sum_{k+1\leq i< j\leq n}\hspace*{-2ex} d_{T'}(i,j)^2+\hspace*{-2ex}\sum_{1\leq i\leq k\atop k+1\leq j\leq n}\hspace*{-2ex} (\delta_T(i)+\delta_{T'}(j)+2)^2\\
\displaystyle\quad=
D^{(2)}(T)+D^{(2)}(T')
+
\hspace*{-2ex}\sum_{1\leq i\leq k\atop k+1\leq j\leq n}\hspace*{-2ex} (\delta_T(i)^2+\delta_{T'}(j)^2+2\delta_T(i)\delta_{T'}(j)+4\delta_T(i)+4\delta_{T'}(j)+4)\\
\displaystyle\quad=
D^{(2)}(T)+D^{(2)}(T')+(n-k)\sum_{i=1}^{k} (\delta_T(i)^2+4\delta_{T}(i))+k
\sum_{j=k+1}^{n} (\delta_{T'}(j)^2+4\delta_{T'}(j))\\
\displaystyle\quad\qquad \qquad \qquad +2\Big(\sum_{i=1}^k \delta_T(i)\Big)
\Big(\sum_{j=k+1}^{n}\delta_{T'}(j)\Big)+4k(n-k)\\
\displaystyle\quad=
D^{(2)}(T)+D^{(2)}(T')+
(n-k)(S^{(2)}(T)+4S(T))+k(S^{(2)}(T')+4S(T'))+2S(T)S(T')+4k(n-k).
\end{array}
$$
\end{proof}

Now we can compute explicit formulas for $E_Y(S_n^{(2)})$ and $E_Y(D_n^{(2)})$

\begin{nproposition}\label{EY:S2}
$E_Y(S_n^{(2)})= 4n(H_n^2-H_n^{(2)})-6n(H_n-1)$.
\end{nproposition}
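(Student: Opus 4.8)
The plan is to set up a recurrence for $E_Y(S_n^{(2)})$ using the recursive description of the Yule process and the decomposition in Lemma \ref{rec}(1), and then solve the recurrence with the help of the harmonic-number summation identities collected in Lemma \ref{lem:sumharm}. First I would recall the standard fact that, under the Yule model, a tree $T\in\BT_n$ can be generated by choosing $k\in\{1,\dots,n-1\}$ with the Yule split probability $q_{n,k}$ (which for the Yule model is $1/(n-1)$ for each $k$, after accounting for the labelling), and then independently growing the two pendant subtrees $T$ and $T'$ as Yule trees on $k$ and $n-k$ leaves respectively, i.e. $T\,\widehat{\ }\,T'$ with the two factors distributed according to $P_Y$ on $\BT_k$ and $\BT_{n-k}$. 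This is the same device already implicit in \cite{MRR} and in the cited balance-index computations.

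Next I would take expectations on both sides of Lemma \ref{rec}(1). Because the two subtrees are chosen independently, $E_Y$ of $S^{(2)}(T\,\widehat{\ }\,T')$ splits as a sum of terms involving $E_Y(S^{(2)}_k)$, $E_Y(S^{(2)}_{n-k})$, $E_Y(S_k)$, $E_Y(S_{n-k})$ and the constant $n$; there is no product term to worry about here (unlike in part (2) of the Lemma). Averaging over $k$ with the uniform weight $1/(n-1)$ and using symmetry $k\leftrightarrow n-k$, I would obtain a recurrence of the shape
$$
E_Y(S_n^{(2)})=\frac{2}{n-1}\sum_{k=1}^{n-1}E_Y(S_k^{(2)})+\frac{4}{n-1}\sum_{k=1}^{n-1}E_Y(S_k)+n,
$$
with base case $E_Y(S_1^{(2)})=0$. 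Substituting the known value $E_Y(S_k)=2k(H_k-1)$ and identity \ref{lem:sumharm}(1)--(2) turns the second sum into an explicit polynomial-plus-harmonic expression in $n$.

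The remaining work is to solve the recurrence. The standard trick is to multiply through by $n-1$, write the analogous identity for $n-1$, and subtract to eliminate the running sum $\sum_{k=1}^{n-1}E_Y(S_k^{(2)})$; this yields a first-order (non-homogeneous) recurrence relating $E_Y(S_n^{(2)})$ to $E_Y(S_{n-1}^{(2)})$ with a forcing term that is an explicit combination of $1$, $H_n$ and $H_n^{(2)}$. That recurrence can then be telescoped, and here the identities \ref{lem:sumharm}(1), (3) and (5) are exactly what is needed to collapse the resulting sums into closed form. Finally I would verify that the candidate answer $4n(H_n^2-H_n^{(2)})-6n(H_n-1)$ satisfies both the recurrence and the base case $n=1$ (where it gives $0$), which certifies it as \emph{the} solution; a quick check at $n=2$ ($E_Y(S_2^{(2)})=2$, the only shape being a cherry with both leaves at depth $1$) is a useful sanity test.

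The main obstacle is purely bookkeeping: the ``multiply-by-$(n-1)$-and-subtract'' step produces a forcing term in which harmonic numbers appear with polynomial coefficients, and telescoping it correctly requires careful use of several of the Lemma \ref{lem:sumharm} identities at once, with the attendant risk of sign or index-shift errors. There is no conceptual difficulty — the whole computation is driven by the independence of the two Yule subtrees and the linearity of $S^{(2)}$ in Lemma \ref{rec}(1) — but the algebra must be handled with discipline. For safety I would cross-check the final formula numerically for small $n$ against a direct enumeration of $\BT_n$ with Yule weights, exactly as the authors do for $E_Y(N_n^2)$.
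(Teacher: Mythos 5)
Your proposal is correct and follows essentially the same route as the paper: conditioning on the root split (uniform over $k$ with independent Yule subtrees), taking expectations in Lemma~2(1) to get the full-history recurrence, eliminating the running sum by subtracting the $(n-1)$ instance, and telescoping $x_n=E_Y(S_n^{(2)})/n$ via the harmonic-number identities (the paper only ends up needing identity (3) for the final sum). The base case and the $n=2$ sanity check are also as in the paper's argument.
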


\begin{proof}
We compute $E_Y(S_n^{(2)})$ using its very definition:
$$
\begin{array}{l}
E_Y(S_n^{(2)}) \displaystyle =\sum_{T\in \BT_n} S^{(2)}(T)\cdot p_Y(T)
 =\frac{1}{2}
\sum_{k=1}^{n-1}\sum_{S_k\subsetneq\{1,\ldots,n\}\atop |S_k|=k}
 \sum_{T_k\in \BT(S_k)}\sum_{T'_{n-k}\in \BT(S_k^c)}S^{(2)}(T_k\widehat{\ }\, {}T'_{n-k})\cdot p_Y(T_k\widehat{\ }\, {}T'_{n-k})
 \\ 
\quad   \displaystyle =\frac{1}{2} \sum_{k=1}^{n-1}\binom{n}{k}
 \sum_{T_k\in \BT_k}\sum_{T'_{n-k} \in \BT_{n-k}}\Big(S^{(2)}(T_k)+S^{(2)}(T_{n-k}')+ 2(S(T_k)+S(T_{n-k}'))+n\Big)\\
\displaystyle\quad\qquad 
\cdot\dfrac{2}{(n-1)\binom{n}{k}} P_Y(T_{k})P_Y(T'_{n-k})\\
\quad   \displaystyle =\frac{1}{n-1}\sum_{k=1}^{n-1} \bigg(
\sum_{T_k}\sum_{T'_{n-k}}  S^{(2)}(T_k)P_Y(T_{k})P_Y(T'_{n-k})+\sum_{T_k}\sum_{T'_{n-k}}  S^{(2)}(T'_{n-k})P_Y(T_{k})P_Y(T'_{n-k})\\
\displaystyle\quad\qquad 
+2\sum_{T_k}\sum_{T'_{n-k}}  S(T_k)P_Y(T_{k})P_Y(T'_{n-k})+2 \sum_{T_k}\sum_{T'_{n-k}}  S(T'_{n-k})P_Y(T_{k})P_Y(T'_{n-k})\\
\displaystyle\quad\qquad 
+ \sum_{T_k}\sum_{T'_{n-k}}  n P_Y(T_{k})P_Y(T'_{n-k})\bigg)\\
\quad   \displaystyle =\frac{1}{n-1}\sum_{k=1}^{n-1} \bigg(
\sum_{T_k}  S^{(2)}(T_k)P_Y(T_{k}) + \sum_{T'_{n-k}}  S^{(2)}(T'_{n-k})P_Y(T'_{n-k})\\
\displaystyle\quad\qquad 
+2 \sum_{T_k} S(T_k)P_Y(T_{k})+2\sum_{T'_{n-k}}  S(T_{n-k}')P_Y(T'_{n-k})+n\bigg)\\
\quad   \displaystyle =\frac{1}{n-1}\sum_{k=1}^{n-1} \bigg(
E_Y(S^{(2)}_k) + E_Y(S^{(2)}_{n-k})+2E_Y(S_k)+2E_Y(S_{n-k})+n\bigg)\\
\quad   \displaystyle =  
\frac{2}{n-1}\sum_{k=1}^{n-1}E_Y(S^{(2)}_k) +\frac{4}{n-1}\sum_{k=1}^{n-1}E_Y(S_k) +n
\end{array}
$$
In particular
$$
E_Y(S_{n-1}^{(2)})=\frac{2}{n-2}\sum_{k=1}^{n-2}E_Y(S^{(2)}_k) +\frac{4}{n-2}\sum_{k=1}^{n-2}E_Y(S_k) +n-1
$$
and therefore
$$
\begin{array}{l}
E_Y(S_n^{(2)}) \displaystyle =\frac{n-2}{n-1}\cdot \frac{2}{n-2}\sum_{k=1}^{n-2}E_Y(S^{(2)}_k)+\frac{2}{n-1}E_Y(S^{(2)}_{n-1})\\
\displaystyle\quad\qquad \quad\qquad 
+\frac{n-2}{n-1}\cdot \frac{4}{n-2}\sum_{k=1}^{n-2}E_Y(S_k)+\frac{4}{n-1}E_Y(S_{n-1})
+\frac{n-2}{n-1}\cdot (n-1)+2\\
\displaystyle \quad =
\frac{n-2}{n-1}E_Y(S_{n-1}^{(2)})+\frac{2}{n-1}E_Y(S^{(2)}_{n-1})+\frac{4}{n-1}E_Y(S_{n-1})+2=\frac{n}{n-1}E_Y(S_{n-1}^{(2)})+8H_{n-1}-6
\end{array}
$$
Setting $x_n=E_Y(S_n^{(2)})/n$, this recurrence becomes
$$
x_n=x_{n-1}+\frac{8H_{n-1}}{n}-\frac{6}{n}.
$$
Since $S^{(2)}$ applied to a single node is 0, $x_1=E_Y(S_1^{(2)})=0$, and the solution of this recursive equation with this initial condition is
$$
x_n=\sum_{k=2}^n\Big(\frac{8H_{k-1}}{k}-\frac{6}{k}\Big)=
8\sum_{k=1}^{n-1}\frac{H_k}{k+1}-6\sum_{k=2}^n \frac{1}{k}=
4(H_n^2-H_n^{(2)})-6(H_n-1)
$$
from where we deduce that
$$
E_Y(S_n^{(2)})=nx_n=4n(H_n^2-H_n^{(2)})-6n(H_n-1)
$$
as we claimed.
\end{proof}

\setcounter{theorem}{2}

\begin{theorem}
$E_Y(D_n^{(2)})= 8n(n+1)(H_{n}^2-H_{n}^{(2)})-2n(15n+7)H_n+45n^2-n$
\end{theorem}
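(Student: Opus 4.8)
The plan is to mimic the computation of $E_Y(S_n^{(2)})$ carried out above for the Sackin square. We peel off the two pendant subtrees $T_k,T'_{n-k}$ hanging from the root, apply the recursive formula for $D^{(2)}(T_k\,\widehat{\ }\,T'_{n-k})$ from the recursion Lemma, and take expectations using the standard fact that a Yule tree on $n$ leaves is obtained by joining two \emph{independent} Yule trees on $k$ and $n-k$ leaves, where $k$ is uniform in $\{1,\dots,n-1\}$ and the labels are distributed uniformly, so that $p_Y(T_k\,\widehat{\ }\,T'_{n-k})=\tfrac{2}{(n-1)\binom{n}{k}}P_Y(T_k)P_Y(T'_{n-k})$. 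Substituting the recursion for $D^{(2)}$, using independence for the cross term $2S(T_k)S(T'_{n-k})$, and exploiting the symmetry $k\leftrightarrow n-k$, one obtains a recurrence of the shape
$$
E_Y(D_n^{(2)})=\frac{2}{n-1}\sum_{k=1}^{n-1}E_Y(D_k^{(2)})+\frac{2}{n-1}\sum_{k=1}^{n-1}(n-k)\big(E_Y(S_k^{(2)})+4E_Y(S_k)\big)+\frac{2}{n-1}\sum_{k=1}^{n-1}E_Y(S_k)E_Y(S_{n-k})+\frac{4}{n-1}\sum_{k=1}^{n-1}k(n-k).
$$

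The next step is to turn the right-hand side into an explicit function of $n$. I would plug in the known closed forms $E_Y(S_k)=2k(H_k-1)$ and $E_Y(S_k^{(2)})=4k(H_k^2-H_k^{(2)})-6k(H_k-1)$ (so that $E_Y(S_k^{(2)})+4E_Y(S_k)=4k(H_k^2-H_k^{(2)})+2kH_k-2k$), and evaluate the three finite sums: $\sum_{k=1}^{n-1}k(n-k)=\binom{n+1}{3}$ is elementary; the weighted sum $\sum_{k=1}^{n-1}(n-k)(E_Y(S_k^{(2)})+4E_Y(S_k))$ reduces, after writing $n-k=n-k$, to linear combinations of identities (2),(5),(6) of the harmonic-number Lemma together with elementary power sums and a couple of power-weighted harmonic sums (e.g. $\sum k^2H_k$, $\sum k^2(H_k^2-H_k^{(2)})$) that are handled by the same Abel-summation trick used there; and the interaction sum $\sum_{k=1}^{n-1}E_Y(S_k)E_Y(S_{n-k})=4\sum_{k=1}^{n-1}k(n-k)(H_k-1)(H_{n-k}-1)$, upon expanding $(H_k-1)(H_{n-k}-1)$, reduces to the harmonic convolution $\sum_{k=1}^{n-1}k(n-k)H_kH_{n-k}$ plus lower-order pieces. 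This convolution is the one genuinely non-elementary ingredient, and it is exactly why identity (4) of the harmonic-number Lemma (the Chu--Vandermonde/harmonic-number identity, cf.\ \cite{WGW}) is recorded; I would extract $\sum_{k=1}^{n-1}k(n-k)H_kH_{n-k}$ from it, e.g.\ by summing (4) over an interval together with the symmetry relation $2\sum_{k=1}^{n-1}kH_kH_{n-k}=n\sum_{k=1}^{n-1}H_kH_{n-k}$, paying attention to the fact that (4) is indexed by $H_{n+1}$, so the re-indexing produces correction terms. The outcome of this step is a recurrence $E_Y(D_n^{(2)})=\tfrac{2}{n-1}\sum_{k=1}^{n-1}E_Y(D_k^{(2)})+f(n)$ with $f(n)$ an explicit polynomial-times-harmonic-number expression.

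Finally, exactly as in the proof of the Proposition on $E_Y(S_n^{(2)})$, I would write the same identity with $n$ replaced by $n-1$, multiply by the appropriate factors and subtract to cancel the running sum, collapsing to a first-order recurrence $E_Y(D_n^{(2)})=\tfrac{n}{n-1}E_Y(D_{n-1}^{(2)})+g(n)$ for an explicit $g(n)$; setting $y_n=E_Y(D_n^{(2)})/n$ this becomes $y_n=y_{n-1}+g(n)/n$, which I would solve by telescoping from the base value $E_Y(D_2^{(2)})=4$ and then evaluate the resulting sum $\sum_k g(k)/k$ once more using identities (1)--(3),(5) of the harmonic-number Lemma; collecting terms yields the stated formula. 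The main obstacle is not any single clever idea but the sheer volume of bookkeeping: the interaction sum forces one to juggle several harmonic-number convolution identities with shifted indices, and a mistake in any polynomial coefficient propagates through the subtraction and the telescoping — which is presumably why the authors cross-check the final formula numerically for $n=3,\dots,7$. A secondary point to watch is that the subtraction step must leave $g(n)/n$ within reach of the available identities; if a term such as $H_k^2/k$ survived one would need an additional summation identity, so it is worth confirming that the cancellations in fact bring everything down to the listed sums.
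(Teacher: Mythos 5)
Your proposal is correct and follows essentially the same route as the paper: the same recursion lemma for $D^{(2)}(T\,\widehat{\ }\,T')$, the same Yule splitting probability and resulting recurrence, the same subtraction of the $(n-1)$-instance to collapse to a first-order recurrence for $E_Y(D_n^{(2)})/n$, and the same telescoping. The one organisational difference is that the paper performs the subtraction \emph{before} substituting closed forms, so the interaction term reduces to $\sum_{k}E_Y(S_k)\bigl(E_Y(S_{n-k})-E_Y(S_{n-k-1})\bigr)=4\sum_{k}k(H_k-1)H_{n-k-1}$ and everything is covered by identity (4) (applied at $n-1$) together with identities (1), (2) and (6) — which answers your closing worry: this ordering avoids the full convolution $\sum_k k(n-k)H_kH_{n-k}$ and the $k^2$-weighted harmonic sums that your evaluate-then-subtract ordering would force you to derive separately.
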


\begin{proof}
Again, we compute $E_Y(D_n^{(2)})$ using its very definition:
$$
\begin{array}{l}
E_Y(D_n^{(2)}) \displaystyle =\sum_{T\in \BT_n} D^{(2)}(T)\cdot p_Y(T)
 \displaystyle =\frac{1}{2}
\sum_{k=1}^{n-1}\sum_{S_k\subsetneq\{1,\ldots,n\}\atop |S_k|=k}
 \sum_{T_k\in \BT(S_k)}\sum_{T'_{n-k}\in \BT(S_k^c)}D^{(2)}(T_k\widehat{\ }\, {}T'_{n-k})\cdot p_Y(T_k\widehat{\ }\, {}T'_{n-k}) \\ 

\quad   \displaystyle =\frac{1}{2} \sum_{k=1}^{n-1}\binom{n}{k}
 \sum_{T_k\in \BT_k}\sum_{T'_{n-k} \in \BT_{n-k}}\Big(D^{(2)}(T_k)+D^{(2)}(T'_{n-k})+2S(T_k)S(T'_{n-k})\\
\displaystyle\quad\qquad 
+
(n-k)(S^{(2)}(T_k)+4S(T_k))+k(S^{(2)}(T_{n-k}')+4S(T_{n-k}'))+4k(n-k)) \Big) 
\cdot\dfrac{2}{(n-1)\binom{n}{k}} P_Y(T_{k})P_Y(T'_{n-k})\\
\quad   \displaystyle =\frac{1}{n-1}\sum_{k=1}^{n-1} \bigg(
\sum_{T_k}\sum_{T'_{n-k}}  D^{(2)}(T_k)P_Y(T_{k})P_Y(T'_{n-k})+\sum_{T_k}\sum_{T'_{n-k}}  D^{(2)}(T'_{n-k})P_Y(T_{k})P_Y(T'_{n-k})\\
\displaystyle\quad\qquad 
+2\sum_{T_k}\sum_{T'_{n-k}} S(T_k)S(T'_{n-k})P_Y(T_{k})P_Y(T'_{n-k})+(n-k)\sum_{T_k}\sum_{T'_{n-k}}  S^{(2)}(T_k)P_Y(T_{k})P_Y(T'_{n-k})\\
\displaystyle\quad\qquad 
+4(n-k) \sum_{T_k}\sum_{T'_{n-k}}  S(T_k)P_Y(T_{k})P_Y(T'_{n-k})+k\sum_{T_k}\sum_{T'_{n-k}}  S^{(2)}(T_{n-k}')P_Y(T_{k})P_Y(T'_{n-k})\\
\displaystyle\quad\qquad 
+4k\sum_{T_k}\sum_{T'_{n-k}}  S(T_{n-k}')P_Y(T_{k})P_Y(T'_{n-k})+4\sum_{T_k}\sum_{T'_{n-k}}  k(n-k)P_Y(T_{k})P_Y(T'_{n-k})\bigg)\\

\quad   \displaystyle =\frac{1}{n-1}\sum_{k=1}^{n-1} \bigg(
\sum_{T_k}  D^{(2)}(T_k)P_Y(T_{k}) + \sum_{T'_{n-k}}  D^{(2)}(T'_{n-k})P_Y(T'_{n-k})\\
\displaystyle\quad\qquad 
+2\Big( \sum_{T_k} S(T_k)P_Y(T_{k})\Big)\Big(\sum_{T'_{n-k}}  S(T_{n-k}')P_Y(T'_{n-k})\Big)+(n-k)\sum_{T_k} S^{(2)}(T_k)P_Y(T_{k})\\
\displaystyle\quad\qquad 
+4(n-k) \sum_{T_k} S(T_k)P_Y(T_{k})+k \sum_{T'_{n-k}}  S^{(2)}(T_{n-k}') P_Y(T'_{n-k})+4k\sum_{T'_{n-k}}  S(T_{n-k}')P_Y(T'_{n-k})+4k(n-k)\bigg)\\

\quad   \displaystyle =\frac{1}{n-1}\sum_{k=1}^{n-1} \bigg(
E_Y(D^{(2)}_k) + E_Y(D^{(2)}_{n-k})+2E_Y(S_k)E_Y(S_{n-k})+(n-k)E_Y(S^{(2)}_k) \\
\displaystyle\quad\qquad 
+4(n-k)E_Y(S_k)+kE_Y(S^{(2)}_{n-k})+4kE_Y(S_{n-k})+4k(n-k)\bigg)\\
\quad   \displaystyle =\frac{2}{n-1}\sum_{k=1}^{n-1} \bigg(
E_Y(D^{(2)}_k) + E_Y(S_k)E_Y(S_{n-k})+(n-k)E_Y(S^{(2)}_k) +4(n-k)E_Y(S_k)\bigg)+\frac{2}{3}n(n+1)\\
\end{array}
$$
In particular
$$
\begin{array}{l}
E_Y(D_{n-1}^{(2)}) \displaystyle =\frac{2}{n-2}\sum_{k=1}^{n-2} \bigg(
E_Y(D^{(2)}_k) + E_Y(S_k)E_Y(S_{n-1-k})+(n-1-k)E_Y(S^{(2)}_k)\\
\displaystyle\quad\qquad 
 +4(n-1-k)E_Y(S_k)\bigg)+\frac{2}{3}n(n-1)
\end{array}
$$
and therefore
$$
\begin{array}{l}
E_Y(D_n^{(2)}) \displaystyle =
\frac{n-2}{n-1}\cdot \frac{2}{n-2}\sum_{k=1}^{n-2} E_Y(D^{(2)}_k)+\frac{2}{n-1} E_Y(D^{(2)}_{n-1})\\
\displaystyle\quad\qquad 
+\frac{n-2}{n-1}\cdot \frac{2}{n-2}\sum_{k=1}^{n-2}E_Y(S_k)E_Y(S_{n-1-k})+
\frac{2}{n-1}\bigg(\sum_{k=1}^{n-1}E_Y(S_k)E_Y(S_{n-k})-\sum_{k=1}^{n-2}E_Y(S_k)E_Y(S_{n-1-k})\Bigg)\\
\displaystyle\quad\qquad 
+\frac{n-2}{n-1}\cdot \frac{2}{n-2}\sum_{k=1}^{n-2}(n-1-k)E_Y(S^{(2)}_k)+
\frac{2}{n-1}\bigg(\sum_{k=1}^{n-1}(n-k)E_Y(S^{(2)}_k)-\sum_{k=1}^{n-2}(n-1-k)E_Y(S^{(2)}_k)\Bigg)\\
\displaystyle\quad\qquad 
+\frac{n-2}{n-1}\cdot \frac{8}{n-2}\sum_{k=1}^{n-2}(n-1-k)E_Y(S_k)+
\frac{8}{n-1}\bigg(\sum_{k=1}^{n-1}(n-k)E_Y(S_k)-\sum_{k=1}^{n-2}(n-1-k)E_Y(S_k)\Bigg)\\
\displaystyle\quad\qquad 
+\frac{n-2}{n-1}\cdot \frac{2}{3}n(n-1)+\frac{2}{3}n(n+1)-\frac{n-2}{n-1}\cdot \frac{2}{3}n(n-1)\\[1.5ex]

\quad \displaystyle =\frac{n-2}{n-1}E_Y(D^{(2)}_{n-1})+\frac{2}{n-1} E_Y(D^{(2)}_{n-1}) +\frac{2}{n-1}\sum_{k=1}^{n-2} E_Y(S_k)(E_Y(S_{n-k})-E_Y(S_{n-k-1}))\\
\displaystyle\quad\qquad 
+\frac{2}{n-1}\sum_{k=1}^{n-1} E_Y(S_{k}^{(2)})+\frac{8}{n-1}\sum_{k=1}^{n-1} E_Y(S_{k})+2n\\

\quad \displaystyle =\frac{n}{n-1}E_Y(D^{(2)}_{n-1})+\frac{8}{n-1}\sum_{k=1}^{n-2} k(H_k-1) H_{n-k-1}+\frac{2}{n-1}\sum_{k=1}^{n-1} (4k(H_k^2-H_k^{(2)})-6k(H_k-1))\\
\displaystyle\quad\qquad 
+\frac{16}{n-1}\sum_{k=1}^{n-1} k(H_k-1)+2n\\

\quad \displaystyle =\frac{n}{n-1}E_Y(D^{(2)}_{n-1})+\frac{8}{n-1}\sum_{k=1}^{n-2} kH_kH_{n-k-1}-
\frac{8}{n-1}\sum_{k=1}^{n-2} kH_{n-k-1}+\frac{8}{n-1}\sum_{k=1}^{n-1}  k(H_k^2-H_k^{(2)})\\
\displaystyle\quad\qquad 
+\frac{4}{n-1}\sum_{k=1}^{n-1} k(H_k-1)+2n\\

\quad \displaystyle =\frac{n}{n-1}E_Y(D^{(2)}_{n-1})+\frac{8}{n-1}\sum_{k=1}^{n-2} kH_kH_{n-k-1}
-\frac{8}{n-1}\sum_{k=1}^{n-2} (n-k-1)H_k+\frac{8}{n-1}\sum_{k=1}^{n-1}  k(H_k^2-H_k^{(2)})\\
\displaystyle\quad\qquad 
+\frac{4}{n-1}\sum_{k=1}^{n-1} kH_k-\frac{4}{n-1}\sum_{k=1}^{n-1} k+2n\\

\quad \displaystyle =\frac{n}{n-1}E_Y(D^{(2)}_{n-1})+\frac{8}{n-1}\sum_{k=1}^{n-2} kH_kH_{n-k-1}
-8\sum_{k=1}^{n-2} H_k
+\frac{8}{n-1}\sum_{k=1}^{n-1}  k(H_k^2-H_k^{(2)})\\
\displaystyle\quad\qquad 
+\frac{12}{n-1}\sum_{k=1}^{n-1} kH_k-8H_{n-1}\\

\quad \displaystyle =\frac{n}{n-1}E_Y(D^{(2)}_{n-1})+ 4n(H_{n}^2-H_{n}^{(2)}-2H_{n}+2)
-8(n-1)(H_{n-1}-1)+4n(H_n^2-H_n^{(2)})\\
\displaystyle\quad\qquad 
-2n(2H_{n}-1)+3n(2H_{n}-1)-8H_{n-1}\\

\quad \displaystyle =\frac{n}{n-1}E_Y(D^{(2)}_{n-1})+
8n(H_{n}^2-H_{n}^{(2)})-14nH_{n-1}+15n-14
\end{array}
$$
Setting $x_n=E_Y(D_n^{(2)})/n$, this recurrence becomes
$$
x_n=x_{n-1}+8(H_{n}^2-H_{n}^{(2)})-14H_{n-1}+15-\frac{14}{n}.
$$
The solution of this recursive with $x_1=E_Y(D_1^{(2)})=0$ is
$$
\begin{array}{rl}
x_n& \displaystyle=\sum_{k=2}^{n}\Big(8(H_{k}^2-H_{k}^{(2)})-14H_{k-1}+15-\frac{14}{k}\Big)\\
& \displaystyle=8\sum_{k=1}^{n} (H_{k}^2-H_{k}^{(2)})-14\sum_{k=1}^{n-1}H_{k}+15(n-1)-14\sum_{k=2}^{n}\frac{1}{k}\\
& \displaystyle=8(n+1)(H_{n+1}^2-H_{n+1}^{(2)})-16(n+1)(H_{n+1}-1)-14n(H_n-1)+15(n-1)-14(H_n-1)\\
& \displaystyle=8(n+1)(H_{n}^2-H_{n}^{(2)})-2(15n+7)H_n+45n-1
\end{array}
$$
from where we deduce that
$$
E_Y(D_n^{(2)})=nx_n=8n(n+1)(H_{n}^2-H_{n}^{(2)})-2n(15n+7)H_n+45n^2-n
$$
as we claimed.
\end{proof}


\begin{thebibliography}{10}

 \bibitem{Brown} J. Brown, Probabilities of evolutionary trees. Syst. Biol. 43 (1994), 78--91.

\bibitem{CMR}
G. Cardona, A. Mir, F. Rossell\'o, Exact formulas for the variance of several balance indices under the Yule model. 	arXiv:1202.6573v1 [q-bio.PE], submitted.



\bibitem{CHJ} Y. Chen, Q. Hou, H. Jin, The Abel-Zeilberger algorithm. Electron. J. Comb. 18 (2011) \#\! P17.

\bibitem{farris:69}
J.~S. Farris, A successive approximations approach to character weighting.
  Syst.\ Zool. 18 (1969) 374--385.

\bibitem{farris:73}
J.~S. Farris, On comparing the shapes of taxonomic trees. Syst.\ Zool. 22
  (1973) 50--54.

\bibitem{fel:04} J.~Felsenstein,  Inferring Phylogenies. Sinauer Associates Inc., 2004.

\bibitem{Harding71} E. Harding, The probabilities of rooted tree-shapes generated by random bifurcation. Adv. Appl. Prob. 3  (1971), 44--77.

\bibitem{Heard92} S. B. Heard,
Patterns in Tree Balance among Cladistic, Phenetic, and Randomly Generated Phylogenetic Trees.
Evolution
46  (1992),  1818--1826 

\bibitem{KiSl:93}
M. Kirkpatrick, M. Slatkin, Searching for evolutionary patterns in the shape
of a phylogenetic tree. Evolution 47 (1993), 1171--1181.


\bibitem{Knuth1} D. Knuth, 
The Art of Computer Programming, Vol. 1: Fundamental Algorithms (3rd Edition). Addison-Wesley (1997).

\bibitem{MirR10} 
A. Mir, F. Rossell\'o, The mean value of the squared path-difference distance
  for rooted phylogenetic trees.
Journal of Mathematical Analysis and Applications  371 (2010),
  168--176.
  

\bibitem{MRR} 
A. Mir, F. Rossell\'o, L. Rotger, A new balance index for phylogenetic trees.
arXiv:1202.1223v1 [q-bio.PE] (2012), submitted.


\bibitem{Mooers97}
A. Mooers, S. B. Heard, Inferring evolutionary process from phylogenetic tree shape. Quart. Rev. Biol. 72  (1997) 31--54.

\bibitem{phipps:71}
J.~B. Phipps, Dendrogram topology, Syst.\ Zool. 20 (1971) 306--308.


\bibitem{Sackin:72} M. J. Sackin, ``Good'' and ``bad'' phenograms. Sys. Zool, 21 (1972), 225--226.

\bibitem{cherries} M. Steel, A. McKenzie, Distributions of cherries for two models of trees. Math. Biosc. 164 (2000), 81--92.


\bibitem{SM01} M. Steel, A. McKenzie, Properties of phylogenetic trees generated by Yule-type speciation models. Math. Biosc. 170 (2001), 91--112.


\bibitem{steelpenny:sb93}
M.~A. Steel, D.~Penny, Distributions of tree comparison metrics---some new
  results, Syst.\ Biol. 42~(2) (1993) 126--141.

\bibitem{WGW} C. Wei, D. Gong, Q. Wang, Chu-Vandermonde convolution and harmonic number identities.
arXiv:1201.0420v1 [math.CO]
 (2012)
  
\bibitem{willcliff:71}
W.~T. Williams, H.~T. Clifford, On the comparison of two classifications of the
  same set of elements, Taxon 20~(4) (1971) 519--522.


\bibitem{Yule} G. U. Yule, A mathematical theory of evolution based on the conclusions of Dr J. C. Willis. Phil. Trans.   Royal Soc. (London) Series B 213 (1924), 21--87.

\end{thebibliography}
\end{document}